\documentclass[sigconf]{aamas}

\usepackage{balance} 

\usepackage[colorinlistoftodos]{todonotes}
\usepackage{algorithm}
\usepackage{algorithmic}
\usepackage{subcaption}
\usepackage{enumitem}
\usepackage[most]{tcolorbox}
\usepackage{xcolor}
\usepackage{graphicx} 
\usepackage{tikz}
\usetikzlibrary{arrows.meta, positioning, shapes.geometric}

\doi{HWRI8919}

\makeatletter
\gdef\@copyrightpermission{
  \begin{minipage}{0.2\columnwidth}
   \href{https://creativecommons.org/licenses/by/4.0/}{\includegraphics[width=0.90\textwidth]{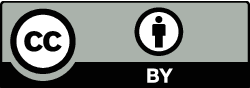}}
  \end{minipage}\hfill
  \begin{minipage}{0.8\columnwidth}
   \href{https://creativecommons.org/licenses/by/4.0/}{This work is licensed under a Creative Commons Attribution International 4.0 License.}
  \end{minipage}
  \vspace{5pt}
}
\makeatother

\setcopyright{ifaamas}
\acmConference[AAMAS '26]{Proc.\@ of the 25th International Conference
on Autonomous Agents and Multiagent Systems (AAMAS 2026)}{May 25 -- 29, 2026}
{Paphos, Cyprus}{C.~Amato, L.~Dennis, V.~Mascardi, J.~Thangarajah (eds.)}
\copyrightyear{2026}
\acmYear{2026}
\acmDOI{}
\acmPrice{}
\acmISBN{}

\title[Robust Information Design beyond Optimistic Equilibria]{Robust Information Design for Multi-Agent Systems with Complementarities: Smallest-Equilibrium Threshold Policies}

\author{Farzaneh Farhadi}
\affiliation{
  \institution{Aston University}
  \city{Birmingham}
  \country{United Kingdom}}
\email{f.farhadi@aston.ac.uk}

\author{Maria Chli}
\affiliation{
  \institution{Aston University}
  \city{Birmingham}
  \country{United Kingdom}}
\email{m.chli@aston.ac.uk}

\begin{abstract}
We study information design in multi-agent systems (MAS) with binary actions and strategic complementarities, where an external designer influences behavior only through signals. Agents play the \emph{smallest-equilibrium} of the induced Bayesian game, reflecting conservative, coordination-averse behavior typical in distributed systems. We show that when utilities admit a convex potential and welfare is convex, the robustly implementable optimum has a remarkably simple form: \emph{perfect coordination} at each state: either everyone acts or no one does. We provide a \emph{constructive threshold rule}: compute a one-dimensional score for each state, sort states, and pick a single threshold (with a knife-edge lottery for at most one state). This rule is an explicit optimal vertex of a linear program (LP) characterized by \emph{feasibility} and \emph{sequential obedience} constraints. Empirically, in both vaccination and technology-adoption domains, our constructive policy matches LP optima, scales as $O(|\Theta|\log|\Theta|)$, and avoids the inflated welfare predicted by \emph{obedience-only} designs that assume the designer can dictate the (best) equilibrium. The result is a general, scalable recipe for robust coordination in MAS with complementarities.
\end{abstract}

\keywords{Information design; Robust coordination; Multi-agent systems; Potential games; Strategic complementarities; Linear programming}

\newcommand{\BibTeX}{\rm B\kern-.05em{\sc i\kern-.025em b}\kern-.08em\TeX}

\begin{document}


\pagestyle{fancy}
\fancyhead{}


\maketitle 


\section{Introduction}
\paragraph{\textbf{Binary cooperation with complementarities.}}
Large-scale multi-agent systems (MAS) often involve binary cooperation decisions (such as whether to vaccinate \citep{bauch2004vaccination}, comply with regulations \citep{DELLANNO2009988}, or adopt a common standard \citep{katz1985network}), where incentives are complementary: the benefit of cooperating increases with the number of other cooperating agents. In such settings, a designer (e.g., public-health authority, regulator, platform operator) typically observes the environment more accurately than agents but cannot mandate actions, and thus must \emph{shape beliefs via information sharing}.

\paragraph{\textbf{Limitations of classical information design.}} Classical information design (e.g., Bayesian persuasion, Bayes--correlated equilibrium) studies \emph{partial implementation}: the designer specifies an information disclosure policy, makes recommendations accordingly (inducing a Bayesian game among agents), and assumes that when multiple equilibria exist, agents select an equilibrium consistent with those recommendations \citep{KamenicaGentzkow2011,BergemannMorris2016}. This optimistic assumption effectively allows the designer to select the \emph{best} equilibrium once incentive constraints hold. While analytically convenient, it can overstate what is achievable in decentralized or coordination-averse systems, where agents reason cautiously and coordination must emerge gradually rather than being taken for granted \citep{morris2024implementation,Chli2015}.

\paragraph{\textbf{Robust implementation via smallest-equilibrium play.}} We instead study implementation under smallest-equilibrium play, where agents coordinate on the worst equilibrium consistent with their information. This captures fragile coordination settings in which cooperation is chosen only if strictly profitable despite pessimistic beliefs.

\paragraph{\textbf{From LP characterization to constructive design.}} We characterize smallest-equilibrium implementability via a linear program (LP) with feasibility and sequential-obedience constraints. Under convex potential and welfare, the LP admits an explicit threshold solution, reducing design to scoring and sorting. The optimal policy achieves perfect coordination per state and is computable in \(O(|\Theta|\log|\Theta|)\) time.

\paragraph{\textbf{Empirical validation in MAS domains.}} We validate the framework in two contrasting MAS domains. \textbf{Case A (Vaccination)} uses a small discrete type space to make the mechanics of the threshold rule transparent. \textbf{Case B (Technology adoption)} approximates a continuous state space, showing scalability and generality. In both settings, our policy matches the LP optimum and remains \emph{robust under smallest-equilibrium play}, while classical partial-implementation designs predict systematically higher but non-credible welfare.

\paragraph{\textbf{Contributions.}}
\begin{itemize}\setlength\itemsep{2pt}
\item \textbf{From partial to robust implementation.} We contrast classical \emph{partial implementation}, where the designer can assume play of the best equilibrium, with the adversarial \emph{smallest-equilibrium} setting, and we formalize implementability through an LP with \emph{feasibility} and \emph{sequential obedience} constraints.
\item \textbf{Constructive, closed-form optimum.} Under convexity, the optimal smallest-equilibrium outcome is \emph{perfect coordination}, achieved by a \emph{threshold rule} based on a one-dimensional state score; this rule is an explicit optimal vertex of the LP.
\item \textbf{Scalable to large MAS.} The constructive rule reduces design to scoring and sorting, avoiding large-scale optimization while matching LP solutions and scaling well.
\item \textbf{Empirical evidence in MAS domains.} In vaccination and technology adoption, we quantify how partial-implementation baselines overstate welfare and show the robustness of our threshold policy.
\end{itemize}



\section{Related Work}

\paragraph{\textbf{Why not use existing methods? From classical to robust information design.}}
Classical information design, including \emph{Bayesian persuasion} and \emph{Bayes–correlated equilibrium (BCE)}, assumes that once obedience holds, agents coordinate on the designer’s preferred equilibrium \citep{KamenicaGentzkow2011,BergemannMorris2016}.  
Information disclosure has also been studied under this optimistic view in other settings, including security games \citet{Rabinovich2015}, Bayesian Stackelberg signaling \citet{Xu2016}, and persuasion with mechanism design \citet{Castiglioni2022}. 
These approaches rely on favorable equilibrium selection.

In many coordination problems, such optimism is implausible: cooperation must build gradually, and agents may anticipate deviation.  
This motivates \emph{robust information design}, which plans against \emph{worst-case / smallest-equilibrium selection}.  
Recent studies formalize implementability under such pessimistic play \citep{Mathevet2020,morris2024implementation} and develop robust disclosure for coordination, regime change, and panic-resilient systems \citep{InostrozaPavan2025,BasakZhou2025}. 
We adopt this robust perspective for \emph{supermodular MAS} via \emph{sequential obedience}, where agents expect only predecessors to comply.  
This strengthens implementability beyond simultaneous approaches and builds on dynamic and staged persuasion \citep{farhadi2018static,li2021sequential,farhadi2022dynamic,gan2022bayesian,yao2024dynamic}, aligning with recent work on \emph{constructive}, incentive-compatible, and computationally tractable mechanisms \citep{Farhadi2023JAIR}.

\paragraph{\textbf{Outcome- vs.\ policy-level implementability under smallest play.}}
Morris et al.\ \citep{morris2024implementation} provide necessary and sufficient \emph{outcome-level} conditions for distributions over actions and states achievable under smallest–equilibrium play. We translate these to a \emph{policy-level} characterization: an LP whose decision variables are conditional probabilities over \emph{sequences} of recommendations. Our LP comprises \emph{feasibility} and \emph{sequential obedience} constraints for cooperation and non-cooperation, which are the policy-wise counterparts of the outcome conditions. Beyond characterization, we extend~\citep{morris2024implementation} by introducing a \emph{constructive threshold rule policy}, which under convexity solves the LP in closed form and yields the exact optimal sequential design.

\paragraph{\textbf{MAS, network effects, and domain bridges.}}
Strategic \emph{complementarities} pervade many MAS domains, where an agent’s incentive to act increases with the number of others who act.  
In public health, vaccination and sanitation decisions create strong positive spillovers, making herd immunity a coordination challenge \citep{bauch2004vaccination,Guiteras2019Sanitation}.  
In \emph{public-good provision}, contributions exhibit increasing returns as participation grows, leading to multiple self-fulfilling equilibria and free-riding traps \citep{Andreoni1988}.  
\emph{Compliance} with taxation or regulation also depends on peers’ behavior, shaping incentives to comply \citep{allingham1972income,DELLANNO2009988}.  
Finally, \emph{technology adoption} and standardization are classic cases of network externalities, where adoption value rises with others’ uptake, creating tipping points and lock-in effects \citep{katz1985network,FarrellSaloner1985,Garlick2010agent}.

Our analysis focuses on the \emph{fully connected} benchmark, where complementarities are global. Extending to \emph{network games} with local complementarities is natural and discussed in the conclusion, as network topology can shift critical thresholds and alter the knife-edge mixing of robust policies.  
The constructive design we propose (i.e., \emph{score, sort, threshold}) is simple, interpretable, and well-suited to practical deployment in such structured MAS environments.

\section{Problem Setting: Binary-Action Complementarities in MAS Domains}\label{sec:problem-setting}

\begin{table*}[t]
\centering
\caption{Domain adapter: mapping model primitives to MAS applications.}
\label{tab:domain-adapter}
\begin{tabular}{p{3.2cm}p{3.5cm}p{3.2cm}p{3cm}p{2.5cm}}
\toprule
\textbf{Domain} & \textbf{States $\theta$} & \textbf{Cost $c$} & \textbf{Complementarity $\lambda_\theta$} & \textbf{Objective $V$} \\
\midrule
\textbf{Vaccination} 
& Disease prevalence; vaccine effectiveness  
& Health risks; monetary/time costs  
& Herd immunity (coverage $\uparrow$) 
& Social welfare from reduced infection \\
\addlinespace
\textbf{Tax / Fare Compliance} 
& Enforcement level (audit rate, penalties) 
& Payment of taxes or fares  
& Compliance spillovers (norms/deterrence) 
& Revenue net of enforcement \\
\addlinespace
\textbf{Public Good Provision} 
& Project productivity/value  
& Individual contribution cost (effort, money)  
& Thresholds / increasing returns 
& Surplus from funded project \\
\addlinespace
\textbf{Technology Adoption} 
& Market demand; regulatory push; ecosystem readiness 
& Migration, retraining, integration costs  
& Interoperability / network effects 
& Industry efficiency / consumer surplus \\
\bottomrule
\end{tabular}
\end{table*}

We study environments where each agent in a population chooses one of two actions: cooperation ($a_i = 1$) or non-cooperation ($a_i = 0$). The central feature is \emph{strategic complementarity}: an agent’s incentive to cooperate rises with the number of other cooperating agents. Such complementarities appear in many multi-agent systems: individuals deciding whether to vaccinate \citep{bauch2004vaccination}, taxpayers deciding whether to comply with rules or fares \citep{allingham1972income}, communities deciding whether to contribute to a public project \citep{Andreoni1988}, or firms deciding whether to adopt a technology standard \citep{katz1985network,FarrellSaloner1985}. In each case, the information designer is an authority (e.g., a regulator, public health agency, tax administration, or standards consortium) that can influence outcomes only through information disclosure.

Our formulation abstracts away from domain specifics to capture three common primitives: a binary individual choice, payoff complementarities, and an external designer with superior information. We introduce a \emph{domain adapter} (Table~\ref{tab:domain-adapter}) that maps these primitives to diverse MAS applications, illustrating the scope of our analysis.

\subsection{Environment}

\textbf{Agents.}
There are $N$ agents, indexed by $i = 1, \dots, N$. Each chooses either the costly cooperative action ($a_i = 1$) or the default non-cooperative action ($a_i = 0$). An \emph{action profile} is $a = (a_1,\dots,a_N)$, and $n(a) = \sum_{i=1}^N a_i$ denotes the total number of cooperating agents.

\textbf{State.}
The payoff environment depends on an underlying state $\theta \in \Theta$, drawn from a commonly known prior distribution $\mu$ over the finite set $\Theta$. This state captures exogenous conditions (such as infection prevalence, audit intensity, regulatory pressure, or market demand) that determine both the intrinsic value of cooperation and the strength of complementarities. Agents know only the prior but cannot observe the realized state.

\textbf{Designer.}
Unlike the agents, a \emph{designer} observes state $\theta$ and can send signals to influence the agents' choices. The designer’s objective (formalized in Section~\ref{sec:designer}) differs from the agents’ individual payoffs and seeks to steer the system toward desirable collective outcomes. This information asymmetry underpins the designer’s role: by shaping the flow of information, the designer can coordinate agents’ behavior and improve system-level performance.

\subsection{Agents' utilities and incentives}\label{sec:utilities}

\textbf{Utility function.} Agent $i$’s utility in action profile $a$ and state $\theta$ is
\begin{equation}
u_i(a,\theta)\;=\;a_i\Big(b_\theta+\lambda_\theta\,\frac{n(a_{-i})}{N-1}\Big)\;-\;c\,a_i\;+\;\kappa_i(a_{-i},\theta),
\label{eq:utility}
\end{equation}
where $a_{-i}$ denotes the action profile of all agents except $i$, and $n(a_{-i})=\sum_{j\neq i} a_j$ is the number of other agents who choose the cooperative action.

\textbf{Components.} This specification has four essential components. First, $b_\theta$ is the baseline return to cooperation in state $\theta$ (e.g., private vaccination benefits or intrinsic gains from technology adoption). Second, the term $\lambda_\theta \frac{n(a_{-i})}{N-1}$ with $\lambda_\theta \ge 0$ captures \emph{complementarities}: the payoff from cooperation increases with others’ participation. Third, cooperation incurs a fixed cost $c>0$ (e.g., medical or migration costs). Fourth, $\kappa_i(a_{-i},\theta)$ captures agent heterogeneity, such as legacy revenues, subsidies, or exposure to local risks.

\smallskip
\textbf{Assumption (Dominance).}
We assume the existence of a \emph{dominant state} $\theta^{\mathrm{dom}} \in \Theta$ such that
\[
b_{\theta^{\mathrm{dom}}} - c \;>\; 0.
\]
In this state, cooperation $a_i=1$ is strictly profitable for every agent regardless of others’ actions, ensuring that universal cooperation is feasible in some environments.

\textbf{Strategic complementarity.} From \eqref{eq:utility}, each agent $i$'s \emph{marginal gain} from cooperation is
\begin{equation}
G_i(a_{-i},\theta) \triangleq u_i((1,a_{-i}),\theta)-u_i((0,a_{-i}),\theta)=b_\theta-c+\lambda_\theta\frac{n(a_{-i})}{N-1}.
\label{eq:marginal-gain}
\end{equation}
The marginal gain is increasing in $n(a_{-i})$, establishing \emph{strategic complementarity}. Moreover, the symmetry of marginal gains across agents implies the existence of a potential function.

\subsection{Potential representation and convexity}

A key property of the game defined in Section~\ref{sec:utilities} is that it admits a \emph{potential function}.  
A potential function $\Phi$ is a single scalar that tracks unilateral incentives: the change in any agent’s utility from flipping its action equals the corresponding change in $\Phi$.  
Best responses therefore \emph{climb} the potential, and many equilibrium and comparative-statics results follow directly from its shape.

In our setting, the potential takes the form
\begin{equation}\label{eq:potential}
\Phi(a,\theta)=F_\theta\bigl(n(a)\bigr), 
\qquad 
F_\theta(n)=(b_\theta-c)\,n+\lambda_\theta\,\frac{n(n-1)}{2(N-1)}.
\end{equation}
Indeed,
\[
\Phi((1,a_{-i}),\theta)-\Phi((0,a_{-i}),\theta)
  = b_\theta - c + \lambda_\theta\frac{n(a_{-i})}{N-1}, \quad \forall i,
\]
which coincides exactly with the marginal gain in~\eqref{eq:marginal-gain}.  
Thus the game is an \emph{exact potential game} \citep{monderer1996potential}: any unilateral deviation changes an agent’s utility by the same amount as it changes $\Phi$.

\begin{lemma}[Convexity of the potential]
For each state $\theta \in \Theta$, the potential function $F_\theta(n)$ defined in~\eqref{eq:potential} is (discretely) convex in $n$.
\end{lemma}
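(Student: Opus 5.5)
Discrete convexity of $F_\theta$ on $\{0,1,\dots,N\}$ means the forward differences $\Delta F_\theta(n)\triangleq F_\theta(n+1)-F_\theta(n)$ are nondecreasing in $n$. Equivalently, the second differences
\[
\Delta^2 F_\theta(n)=F_\theta(n+2)-2F_\theta(n+1)+F_\theta(n)
\]
are nonnegative for $n=0,\dots,N-2$. The plan is to compute these differences in closed form directly from~\eqref{eq:potential}, since $F_\theta$ is an explicit quadratic in $n$.

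\textbf{Step 1: first difference.} Expanding,
\[
\Delta F_\theta(n)=(b_\theta-c)+\frac{\lambda_\theta}{2(N-1)}\bigl[(n+1)n-n(n-1)\bigr]=b_\theta-c+\lambda_\theta\frac{n}{N-1}.
\]
This is exactly the marginal gain~\eqref{eq:marginal-gain} evaluated at $n(a_{-i})=n$, consistent with the exact-potential identity established above. So one may alternatively just invoke that identity.

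\textbf{Step 2: second difference.} Differencing once more gives
\[
\Delta^2 F_\theta(n)=\frac{\lambda_\theta}{N-1}.
\]
This is nonnegative because $\lambda_\theta\ge 0$ by the modeling assumption on complementarities, and $N\ge 2$ (so that the normalization $N-1$ is defined). Hence $\Delta F_\theta$ is nondecreasing and $F_\theta$ is discretely convex, strictly so when $\lambda_\theta>0$. The case $\lambda_\theta=0$ gives linear $F_\theta$, which is still weakly convex.

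There is no real obstacle. The only care needed is to state which notion of discrete convexity is used (nondecreasing increments on the integer grid), and to note that it is equivalent to the chord inequality $F_\theta(n)\le \frac{F_\theta(n-1)+F_\theta(n+1)}{2}$ needed later. That inequality follows immediately from the nonnegative second difference.
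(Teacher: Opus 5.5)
Your proof is correct and follows the paper's argument: the paper likewise computes the increments $\Delta F_\theta(k)=(b_\theta-c)+\lambda_\theta\frac{k}{N-1}$ and concludes that they are weakly increasing because $\lambda_\theta\ge 0$. Your explicit second difference $\lambda_\theta/(N-1)$ simply writes out that monotonicity step.
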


\begin{proof}
The discrete increments
\[
\Delta F_\theta(k)=F_\theta(k+1)-F_\theta(k)
   =(b_\theta-c)+\lambda_\theta\frac{k}{N-1}
\]
are weakly increasing in $k$ because $\lambda_\theta \ge 0$.  
Hence $F_\theta$ is discretely convex.
\end{proof}

\subsection{Timing and information}\label{sec:timing}
The interaction unfolds in three stages. (i) The designer \emph{commits} to an information disclosure policy $\pi$, which maps each possible state to a distribution of signals. These signals can be public or private and may be delivered either simultaneously or sequentially to the agents.
(ii) Nature draws the state $\theta\sim\mu$.  
(iii) Signals are generated according to the committed policy; upon receiving them, agents update their beliefs and play a Bayesian Nash equilibrium (NE) of the resulting incomplete-information game.

Because the game exhibits \emph{strategic complementarities}, best responses are monotone and the set of Bayesian NE can be ordered by the number of agents who choose the costly action. When multiple equilibria exist, we assume that agents coordinate on the one with the fewest cooperators, referred to as the \emph{smallest-equilibrium}. Standard lattice results guarantee that this equilibrium exists and can be reached by iterated best responses starting from universal inaction \cite{Topkis1979,milgrom1990rationalizability,Vives2005}. This selection models coordination-averse systems: agents take the costly cooperative action only when it is strictly beneficial, which makes the resulting outcomes robust to coordination failure and to pessimistic interpretations of signals.

\subsection{Designer’s problem}\label{sec:designer}

The designer evaluates collective performance through a welfare function 
\(V: A \times \Theta \to \mathbb{R}\),
normalized so that \(V(\mathbf{0},\theta)=0\) for all \(\theta\), where \(\mathbf{0}\) denotes universal non-cooperation.  
We assume that welfare is weakly increasing in the number of cooperating agents and satisfies the following convexity property.

\smallskip
\noindent\textbf{Assumption (Convex welfare).}
For each \(\theta \in \Theta\),
\[
V(a,\theta)\;\leq\;\frac{n(a)}{N}\,V(\mathbf{1},\theta),
\]
where \(\mathbf{1}\) denotes universal cooperation.  
This assumption captures the idea that the marginal social benefit of additional cooperating agents does not decrease as participation grows.

\smallskip
The designer’s goal is to choose an information disclosure policy $\pi$ that maximizes expected welfare,
\begin{equation}\label{eq:exp-welfare}
\max_{\pi}\;\mathbb{E}_{\theta \sim \mu}\bigl[V(a^{\ast}_{\pi}(\theta),\theta)\bigr],
\end{equation}
where \(a^{\ast}_{\pi}(\theta)\) is the smallest Bayesian NE induced by the policy $\pi$ in state \(\theta\).  
The designer has no direct control over agents’ actions and can only shape their beliefs through the signals provided.

Solving the designer’s problem directly is challenging because the space of possible signaling schemes is vast: a policy may assign arbitrary public or private signals, possibly revealed in sequence, and induces complex belief updates and equilibria.  
To proceed, we next characterize the structure of information policies that are sufficient for optimality and derive the incentive constraints they must satisfy.  
This leads to a tractable \emph{linear program}, which forms the foundation for our constructive solution in Section~\ref{sec:perfect-coordination}.

\section{Information Design and Implementability}
\label{sec:info-design}

Classical information design makes an optimistic assumption: if several equilibria exist, agents coordinate on the one the designer prefers. With this assumption, the designer can restrict attention to \emph{simultaneous recommendation policies}, where all agents receive private recommendations at once, and need only ensure that each recommendation is individually optimal to follow (obedience).

Our setting departs from this view by requiring robustness to the \emph{smallest-equilibrium}: agents act conservatively and reason under the most cautious belief about how many others will cooperate, rather than optimistically believing that everyone will follow recommendations. In such environments, simultaneous recommendations are generally insufficient.  
The designer may need to disclose information \emph{sequentially}, allowing cooperation to build gradually while remaining incentive compatible under these conservative beliefs.  
It has been shown in~\cite{morris2024implementation} that this conservative reasoning can be modeled by assuming that each agent expects only those recommended \emph{before} them to comply with their recommendations. Ensuring that agents still follow their recommendations under this belief leads to a strengthened incentive condition, which we call \emph{sequential obedience}, a stricter form of the classical obedience constraint.

In what follows, we formally define sequential information policies, introduce the sequential obedience condition, and show how these elements yield a linear programming formulation of the designer’s problem.

\subsection{Sequential information policies}

A \emph{sequential policy} discloses recommendations in stages (e.g., staggered rollouts where only some agents are informed first).  
We model each staged recommendation process by an \emph{ordered sequence} 
\(\gamma = (i_1,\ldots,i_m)\) of distinct agents, representing the order in which they are invited to cooperate (receive the recommendation $a_i=1$).  
Agents not appearing in \(\gamma\) receive the default non-cooperation recommendation ($a_i=0$).  
Let \(\Gamma\) denote the set of all such ordered subsequences of $N$ agents, including the empty sequence \(\varnothing\).

A \emph{sequential information policy} is then a family of conditional distributions
\[
\pi^{\mathrm{seq}}(\gamma \mid \theta), \qquad
\gamma \in \Gamma, \;\theta \in \Theta,
\]
where \(\pi^{\mathrm{seq}}(\gamma \mid \theta)\) is the probability that the designer, upon observing state \(\theta\), runs the staged recommendation process that invites agents exactly in the order specified by \(\gamma\).
For each \(\theta\), the policy must define a valid probability distribution:
\[
\pi^{\mathrm{seq}}(\gamma \mid \theta)\ge 0
\quad\text{and}\quad
\sum_{\gamma\in\Gamma}\pi^{\mathrm{seq}}(\gamma \mid \theta)=1,
\quad\forall\,\theta\in\Theta.
\tag{Feas}
\]
These requirements are the \emph{feasibility constraints}.

\paragraph{\textbf{Example 1.}}
Consider a setting with $N=3$ agents $\{1,2,3\}$ and a state space $\Theta=\{L,H\}$.  
A possible sequential policy might be
\[
\pi^{\mathrm{seq}}((1,3)\mid L)=0.6,\quad
\pi^{\mathrm{seq}}((2,3)\mid L)=0.4,
\]
\[
\pi^{\mathrm{seq}}((3,1,2)\mid H)=1,
\]
meaning that in state \(L\) the designer invites agents to cooperate in the order \((1,3)\) with probability \(0.6\) and \((2,3)\) with probability \(0.4\), while in state \(H\) the designer always sends cooperative invitations in the order \((3,1,2)\) (first agent \(3\), then agent \(1\), then agent \(2\)). \qed

\subsection{Sequential obedience}

Having introduced sequential policies, we now strengthen the classical incentive requirement to capture the conservative reasoning of smallest-equilibrium play.  
Under this belief system, an invited agent acts as if only those invited \emph{before} it will comply.  
For an information policy to be feasible, every agent must find it optimal to follow its recommendation, whether to cooperate or not, even under this most cautious belief. We refer to this requirement as \emph{sequential obedience}.

Let $a^{\mathrm{pred}}_{-i}(\gamma)$ denote the action profile in which each $j\neq i$ cooperates if and only if $j$ appears \emph{before} $i$ in the sequence $\gamma$.  
(If $i\notin\gamma$, then $a^{\mathrm{pred}}_{-i}(\gamma)$ simply has all agents in $\gamma$ cooperating and all others not.)  
Using the marginal gain function $G_i(a_{-i},\theta)$ defined in~\eqref{eq:marginal-gain}, we express sequential obedience as two families of linear constraints.

\begin{definition}[Sequential obedience — cooperation]
\label{def:SO-C}
A sequential policy $\pi^{\mathrm{seq}}$ satisfies \emph{sequential obedience for cooperation} if, for every agent $i$,
\[
\sum_{\theta\in\Theta}\mu(\theta)
\sum_{\gamma\in\Gamma_i}
\pi^{\mathrm{seq}}(\gamma \mid \theta)\,
G_i\bigl(a^{\mathrm{pred}}_{-i}(\gamma),\theta\bigr)x
\;\ge 0,
\tag{SO-C}
\]
where $\Gamma_i=\{\gamma\in\Gamma : i\in\gamma\}$ is the set of sequences in which $i$ is invited to cooperate.
\end{definition}

Constraint~(SO-C) requires that even a fully cautious agent, believing only earlier invitees will cooperate, still finds it individually rational to accept a recommendation to cooperate. The inequality in~(SO-C) is weak because sequential obedience allows indifference: strict gains identify the smallest equilibrium, while obedience requires only a weak best response.

\begin{definition}[Sequential obedience — non-cooperation]
\label{def:SO-N}
A sequential policy $\pi^{\mathrm{seq}}$ satisfies \emph{sequential obedience for non-cooperation} if, for every agent $i$,
\[
\sum_{\theta\in\Theta}\mu(\theta)
\sum_{\gamma\in\Gamma\setminus\Gamma_i}
\pi^{\mathrm{seq}}(\gamma \mid \theta)\,
G_i\bigl(a^{\mathrm{pred}}_{-i}(\gamma),\theta\bigr)
\;\le 0,
\tag{SO-N}
\]
where $\Gamma\setminus\Gamma_i=\{\gamma\in\Gamma : i\notin\gamma\}$ is the set of sequences in which $i$ is \emph{not} invited to cooperate.
\end{definition}

Constraint~(SO-N) states that an agent who is not invited must not have an incentive to deviate and cooperate.

\paragraph{\textbf{Example 2 (Checking sequential obedience).}}
Consider Example~1 and suppose the primitives are $\mu(L)=\mu(H)=0.5, b_L=1, b_H=2.4, \lambda_L=0.1, \lambda_H=0.5, c=2$. We illustrate the check for sequential obedience for cooperation (SO-C) using player~3.

\medskip
\noindent\emph{Player 3.} Sequences that invite player~3 are $(1,3)$ and $(2,3)$ in state $L$, and $(3,1,2)$ in state $H$. The marginal gains of cooperation in these cases are
\[
G_3(a^{\mathrm{pred}}_{-3}((1,3)),L)=G_3(a^{\mathrm{pred}}_{-3}((2,3)),L)
= b_L - c + \lambda_L \tfrac{1}{2}
= -0.95,
\]
\[
G_3(a^{\mathrm{pred}}_{-3}((3,1,2)),H)
= b_H - c + \lambda_H \tfrac{0}{2}
= 0.4.
\]
Sequential obedience for cooperation (SO-C) for player~3 is
\[
\sum_{\theta\in\{L,H\}}\mu(\theta)
\sum_{\gamma\in\Gamma_3}\pi^{\mathrm{seq}}(\gamma\mid\theta)\,
G_3\bigl(a^{\mathrm{pred}}_{-3}(\gamma),\theta\bigr)\;\ge 0,
\]
which evaluates to
\[
0.5\Bigl[\,0.6 * (-0.95) + 0.4*(-0.95)\Bigr]
\;+\;
0.5\Bigl[\,1.0* 0.4\Bigr]
=
-0.275
\;<\; 0.
\]
Thus (SO-C) fails for player~3, and the policy in Example~1 is not sequentially obedient. \qed

Sequential obedience is strictly stronger than the standard (simultaneous) obedience used in Bayesian persuasion and BCE.  Standard obedience requires that recommended actions be optimal assuming all other agents follow their recommendations. In contrast, sequential obedience (SO-C and SO-N) strengthens these constraints by certifying that each recommendation is a best reply under the conservative ``only predecessors comply'' belief.  
A simple one-state example illustrates the gap.

\paragraph{\textbf{Example 3 (Sequential obedience is stronger than ordinary obedience).}}
Consider $N=3$ agents and a single state $\Theta=\{K\}$ with primitives
$(b_K,\lambda_K,c)=(1,1.5,2)$. Suppose the designer commits to a policy that always invites all agents to cooperate in the fixed order \(\gamma = (1,2,3)\).

\emph{Standard obedience.}
If every other agent cooperates, one agent’s deviation gain is $G_i(\mathbf{1},K)=b_K-c+\lambda_K=0.5>0$, so “all cooperate’’ is a Bayesian NE and satisfies the usual obedience constraints.

\emph{Sequential obedience.}
Under smallest–equilibrium reasoning, an invited agent expects only earlier invitees to cooperate.  
The first invitee faces \(G_1(\mathbf{0},K)=b_K-c=-1<0\) and would deviate, breaking the cooperation chain. Hence, the policy violates (SO-C) at the very first invitation. \qed

This simple case shows that a recommendation profile implementable under standard/BCE obedience may fail under the stricter sequential obedience required for smallest–equilibrium play.

\subsection{Characterizing implementability and the designer’s problem}

Combining these conditions yields the following.

\begin{theorem}[Characterization of implementable sequential policies]
\label{thm:S-implementability}
A sequential policy $\pi^{\mathrm{seq}}$ induces play consistent with smallest-equilibrium behavior if and only if it satisfies:
\begin{enumerate}
    \item \textbf{Feasibility} (Feas),
    \item \textbf{Sequential obedience — cooperation} (SO-C),
    \item \textbf{Sequential obedience — non-cooperation} (SO-N).
\end{enumerate}
\end{theorem}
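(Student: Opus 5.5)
Our plan is to prove the two directions separately. We read ``induces play consistent with smallest-equilibrium behavior'' as: when each agent $i$ privately learns whether it is invited (and whatever else the staged process reveals), the smallest Bayesian NE of the induced game has every agent following its recommendation. We lean throughout on the lattice structure from Section~\ref{sec:timing}. By strategic complementarity the smallest equilibrium is the limit of iterated best responses starting from universal inaction \citep{Topkis1979,milgrom1990rationalizability}. We also take as given the outcome-level characterization of \citep{morris2024implementation}, which says that smallest-equilibrium implementability is equivalent to obedience under the ``only predecessors comply'' belief for some ordering of the cooperators. Our task is mainly to translate that characterization to the policy variables $\pi^{\mathrm{seq}}(\gamma\mid\theta)$ and to verify the translation step by step.

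\textbf{Sufficiency.} Assume (Feas), (SO-C) and (SO-N) hold. Feasibility simply makes $\pi^{\mathrm{seq}}(\cdot\mid\theta)$ a valid signal distribution. We then argue by induction along the position in the invitation sequence that, in every equilibrium (in particular the smallest), each invited agent cooperates.
\begin{itemize}
\item Fix an equilibrium $\sigma$ and any invited agent $i$. Suppose, as the induction hypothesis, that in $\sigma$ all agents invited before $i$ cooperate.
\item Since $G_i$ is increasing in $n(a_{-i})$, agent $i$'s expected marginal gain under $\sigma$, conditional on its signal, is at least the (SO-C) expression evaluated at $a^{\mathrm{pred}}_{-i}(\gamma)$, which is $\ge 0$.
\item Hence cooperation is a best reply for $i$, and the induction advances.
\end{itemize}
The delicate point is the induction base and the ordering. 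We handle it by iterating best responses from $\mathbf{0}$: at round $k$, every agent at position $\le k$ of the realized $\gamma$ switches to cooperate, because its belief about others weakly dominates the predecessor profile. The limit therefore contains the recommended cooperators. Conversely, (SO-N) ensures that uninvited agents weakly prefer $0$ when all invitees cooperate. Since $a^{\mathrm{pred}}_{-i}(\gamma)$ for $i\notin\gamma$ is exactly ``all invitees cooperate,'' the recommended profile is itself an equilibrium. Combining the two facts, it is the smallest equilibrium reached from below: it dominates the iteration limit and is dominated by it.

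\textbf{Necessity.} Start from a policy whose smallest equilibrium follows recommendations. The iterated-best-response path from $\mathbf{0}$ produces an ordering of cooperators, namely the round in which each agent first switches, with ties broken uniformly at random. We use this ordering to define $\gamma$. An agent switching at round $k$ best-responds to a profile in which only earlier switchers cooperate, which is precisely $a^{\mathrm{pred}}_{-i}(\gamma)$. Averaging over signals and states then gives (SO-C). (SO-N) follows because uninvited agents best-respond with $0$ to the final profile.

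\textbf{Main obstacle.} The hardest step is this necessity direction, and specifically the averaging. Switching is determined by interim beliefs conditional on the agent's signal, whereas (SO-C) is an ex-ante aggregate over all $\gamma\ni i$. We plan to sum the interim inequalities weighted by signal probabilities and use linearity of $G_i$ in the profile. Information structures richer than invitation sequences also need to be reduced to sequential recommendations. For that reduction we would appeal to a revelation-principle argument in the spirit of \citep{morris2024implementation}, garbling signals into the induced order while preserving the inequalities.
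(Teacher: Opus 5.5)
\textbf{Sufficiency.} Your sufficiency direction has a genuine gap. Under your reading (exact implementation by the policy's own invitations, possibly with positions revealed) it cannot be repaired, because the conclusion is false.

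An agent's action can depend only on its signal, not on its realized position in $\gamma$, whereas (SO-C) averages $G_i(a^{\mathrm{pred}}_{-i}(\gamma),\theta)$ over positions and states. In round~1 of the iteration from $\mathbf{0}$, an invited agent faces everyone at $0$, so its gain is $\mathbb{E}[b_\theta-c\mid\text{signal}]$, which (SO-C) does not sign. If the signal revealed that the agent is first, its gain would be $\mathbb{E}[b_\theta-c\mid\text{first}]$, which is negative unless dominance states carry enough weight. So neither your induction on positions nor the claim ``at round $k$ every agent at position $\le k$ switches'' holds.

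The paper's own Case Study~1 gives a concrete failure. Invite everyone (uniform order) in $H$, and everyone with probability $p$ in $L$. Then (SO-C) reads $F_H(3)+pF_L(3)\ge 0$, i.e.\ $p\le 0.684$, and (SO-N) holds. But invitations are all-or-none and the order is independent of $\theta$. So an invited agent, even one told its position, has $\Pr(H\mid\text{invited})=1/(1+p)$ and gain $(0.4-p)/(1+p)<0$ against universal inaction for every $p\in(0.4,0.684]$. Hence $\mathbf{0}$ is the smallest equilibrium.

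You also never use the dominance assumption, which is a warning sign. Without it---say one state with $N=3$, $b-c=-1$, $\lambda=3$---uniform-order invitations satisfy (SO-C) strictly, yet $\mathbf{0}$ is an equilibrium under every information structure.

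The missing idea is that the sequential policy is only a \emph{certificate}, not the implementing information structure. The paper's sketch delegates sufficiency to \citep{morris2024implementation}, whose argument builds a contagion (email-game-type) information structure:
\begin{itemize}
\item agents receive noisy signals of their place in the sequence;
\item the chain is anchored at a dominance state, so the lowest signal makes cooperation dominant;
\item induction runs over signal values rather than positions, and interim beliefs only approximate the (SO-C) weights.
\end{itemize}
This needs $\mu(\theta^{\mathrm{dom}})>0$ and strict inequalities, so in general it delivers approximate (closure) implementation. That matters here because $\pi^{\mathrm{seq}\,\star}$ makes (SO-C) bind. So the hard direction is sufficiency, not necessity as you suggest.

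\textbf{Necessity.} Your necessity argument orders cooperators by the round at which they switch and sums the interim inequalities. It is sound, and it is essentially the outcome-level argument the paper's sketch relies on. But it certifies (SO-C) for the ordering \emph{you} construct, not for the $\gamma$'s of the given $\pi^{\mathrm{seq}}$. These can differ, since reordering $\gamma$ may leave the induced game unchanged while changing (SO-C).

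For instance, take two agents and two states: $D$ with $b-c=1$, $\lambda=0$, prior $0.3$; and $L$ with $b-c=-1$, $\lambda=2$, prior $0.7$. Invite both in $D$. In $L$, invite both w.p.\ $0.3$, only agent~2 w.p.\ $0.3$, and nobody otherwise. With invitation-only signals, the smallest equilibrium follows the recommendations: agent~1 switches in round~1 and agent~2 in round~2. (SO-C) holds under the order $(1,2)$, but fails for agent~2 under $(2,1)$, since $0.3-0.21-0.21<0$.

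So the provable statement, and the one the paper's sketch actually uses, is existential and outcome-level: an outcome is (approximately) smallest-equilibrium implementable iff \emph{some} sequential policy generating it satisfies (Feas), (SO-C) and (SO-N). Sufficiency is then supplied by the contagion construction, not by the policy's own recommendations.
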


\begin{proof}[Proof sketch]
\cite{morris2024implementation} characterize, at the \emph{outcome level}, the distributions over action profiles and states that are implementable under smallest-equilibrium play. Our conditions (Feas), (SO-C), and (SO-N) provide the corresponding \emph{policy-level} characterization, implying that any smallest-equilibrium outcome achievable by an arbitrary information disclosure scheme can also be implemented by a sequential policy satisfying these constraints.
\end{proof}

\medskip
These three families of constraints are linear in the policy distributions $\pi^{\mathrm{seq}}(\gamma \mid \theta)$. 
The designer's optimization problem can therefore be written as the following linear program:
\begin{equation}
\label{eq:optimization-seq}
\begin{aligned}
\max_{\pi^{\mathrm{seq}}}\quad &
   \sum_{\theta\in\Theta} \mu(\theta)
   \sum_{\gamma\in\Gamma}
      \pi^{\mathrm{seq}}(\gamma \mid \theta)
      \,V\bigl(a(\gamma),\theta\bigr), \\
\text{s.t.}\quad 
& \text{(Feas)} , \text{(SO-C)}, \text{(SO-N)},
\end{aligned}
\end{equation}
where $a(\gamma)$ is the action profile in which each player $i$ plays $1$ if and only if $i$ is listed in $\gamma$. The objective function in~\eqref{eq:optimization-seq} is exactly the designer’s expected welfare objective introduced in~\eqref{eq:exp-welfare}, expressed here in terms of the sequential policy variables.

Every optimal solution of~\eqref{eq:optimization-seq} corresponds to a sequential information policy that is robust under smallest-equilibrium play.  
Although problem~\eqref{eq:optimization-seq} is a \emph{linear program} and can, in principle, be solved using standard LP solvers, its size grows extremely quickly.  
The number of decision variables (i.e., the probabilities \(\pi^{\mathrm{seq}}(\gamma \mid \theta)\) for all \(\gamma \in \Gamma\) and \(\theta \in \Theta\)) is \(|\Theta| \cdot |\Gamma|\), where
\[
|\Gamma| = \sum_{k=0}^{N} \binom{N}{k} k! = O(N!),
\]
denotes the number of ordered subsequences of the \(N\) agents.  
Since \(N!\) grows super-exponentially with \(N\), the problem becomes computationally infeasible even for moderate \(N\).  
To overcome this combinatorial blow-up, the next section introduces a \emph{constructive algorithm} that computes an \emph{exact optimal sequential policy} for~\eqref{eq:optimization-seq} in time \(O(|\Theta| \log |\Theta|)\), independent of the number of agents \(N\).

\section{Perfect Coordination under Convexity: A Constructive Characterization}
\label{sec:perfect-coordination}

Section~\ref{sec:info-design} characterized the feasible set of \emph{sequential information policies} via three families of linear constraints (feasibility, sequential obedience for cooperation, and sequential obedience for non-cooperation) and showed that the designer’s problem reduces to the linear program~\eqref{eq:optimization-seq}.  
A natural question for MAS is whether the optimal robust policy must be combinatorially complex (e.g., mixing over many partial invitation sequences) or whether a simple, statewise structure suffices.

The answer is strikingly simple: there always exists an optimal smallest-equilibrium implementable sequential policy that achieves perfect coordination in every state: either all agents cooperate or none do. Convexity of the welfare function removes any advantage of partial adoption, while sequential obedience then enforces a sharp, statewise ``all-or-none'' rule that remains robust under the most cautious agent beliefs.

\subsection{A constructive threshold rule}

Our approach is constructive: we explicitly \emph{build} an optimal perfectly coordinated policy \(\pi^{\mathrm{seq}\,\star}\) and then verify its optimality.

\paragraph{Step 1: Score each state.}  
For each state $\theta$, define
\[
S(\theta) \;=\; \frac{F_\theta(N)}{V(\mathbf{1},\theta)},
\]
the ratio of the potential to welfare under full cooperation.

\paragraph{Step 2: Order states and identify a threshold.}  
Order states so that $S(\theta)$ is weakly increasing.\footnote{Any tie-breaking among states with the same $S(\theta)$ does not affect the construction.} We write $\theta' \succ \theta$ if $\theta'$ comes after $\theta$ in that ordering.  

Let $\theta^\star$ be the (unique) \emph{threshold} state and $p^\star\in[0,1]$ the mixing weight that solve
\begin{equation}\label{eq:p-star}
\sum_{\theta:\,\theta \succ \theta^\star}\mu(\theta)F_\theta(N)
\;+\;
p^\star \mu(\theta^\star) F_{\theta^\star}(N)
\;=\; 0,
\end{equation}
i.e., the weighted sum of full-cooperation potentials across the ``cooperate'' states balances exactly at zero.

\paragraph{Step 3: Construct the perfectly coordinated policy.}  
Define the sequential policy $\pi^{\mathrm{seq}\,\star}$ by assigning probability mass only to the two extreme invitation sets: the empty sequence (no one invited) and the full sequence (all invited).  
For any ordered sequence $\gamma\in\Gamma$, let $|\gamma|$ denote its length.  Then
\begin{equation}
\pi^{\mathrm{seq}\,\star}(\gamma\mid\theta)
=
\begin{cases}
1/N!, & \!|\gamma|=N,\;\theta\succ\theta^\star,\\[2pt]
p^\star/N!, & \!|\gamma|=N,\;\theta=\theta^\star,\\[2pt]
1-p^\star, & \!|\gamma|=0,\;\theta=\theta^\star,\\[2pt]
1, & \!|\gamma|=0,\;\theta\prec\theta^\star,\\[2pt]
0, & \!\text{otherwise.}
\end{cases}
\label{eq:pi-star}
\end{equation}
States above the threshold induce \emph{universal cooperation} (any full ordering of agents chosen uniformly); states below induce \emph{universal defection}; and the threshold state mixes between them in just the right proportion.  
This construction balances the incentives implied by sequential obedience: cooperation is sustained only where the environment is strong enough, while weaker states lead to coordinated defection. The optimal policy $\pi^{\mathrm{seq}\,\star}$ sets the (SO-C) constraints exactly at equality via the choice of $p^\star$~\eqref{eq:p-star}.  
An algorithmic description appears in Algorithm~\ref{alg:optimal-policy}.

\begin{algorithm}[t]
\caption{Constructing the perfectly coordinated optimal sequential policy $\pi^{\mathrm{seq}\,\star}$}
\label{alg:optimal-policy}
\begin{algorithmic}[1]
\STATE Compute $S(\theta)=F_\theta(N)/V(\mathbf{1},\theta)$ for all $\theta$.
\STATE Sort states by nondecreasing $S(\theta)$.
\STATE Find the smallest $\theta^\star$ such that
\[
\sum_{\theta:\,\theta \succ \theta^\star}\mu(\theta)F_\theta(N)
\;\le\;0
\;<\;
\sum_{\theta:\,\theta \succeq \theta^\star}\mu(\theta)F_\theta(N).
\]
\STATE Choose $p^\star$ so that
\[
\sum_{\theta:\,\theta \succ \theta^\star}\mu(\theta)F_\theta(N)
\;+\;
p^\star \mu(\theta^\star) F_{\theta^\star}(N)
=0.
\]
\STATE Build $\pi^{\mathrm{seq}\,\star}$ according to \eqref{eq:pi-star}.
\end{algorithmic}
\end{algorithm}

\subsection{Feasibility of $\pi^{\mathrm{seq}\,\star}$}

\begin{lemma}[Feasibility]
\label{lem:pi-star-feasible}
The sequential policy $\pi^{\mathrm{seq}\,\star}$ in~\eqref{eq:pi-star} satisfies feasibility (Feas), sequential obedience for cooperation (SO-C), and sequential obedience for non-cooperation (SO-N).
\end{lemma}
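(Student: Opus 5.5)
The plan is to check the three families of constraints one after another. Feasibility is immediate. (SO-C) follows from a telescoping identity for the potential. (SO-N) is the delicate part.

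\emph{Feasibility.} All values in \eqref{eq:pi-star} are nonnegative, since $p^\star\in[0,1]$. For $\theta\succ\theta^\star$ there are exactly $N!$ sequences with $|\gamma|=N$, each of weight $1/N!$, so the total mass is $1$. For $\theta=\theta^\star$ the total mass is $N!\cdot p^\star/N! + (1-p^\star)=1$. For $\theta\prec\theta^\star$ the empty sequence carries mass $1$. Hence (Feas) holds.

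\emph{(SO-C).} Fix an agent $i$. Under $\pi^{\mathrm{seq}\,\star}$ the only sequences in $\Gamma_i$ with positive mass are full orderings. Under the uniform distribution over full orderings, $i$'s position is uniform on $\{1,\dots,N\}$, so the number $k$ of predecessors is uniform on $\{0,\dots,N-1\}$. By \eqref{eq:marginal-gain} and the proof of the convexity lemma, $G_i(a^{\mathrm{pred}}_{-i}(\gamma),\theta)=\Delta F_\theta(k)$. Averaging over $k$ telescopes:
\[
\frac{1}{N}\sum_{k=0}^{N-1}\Delta F_\theta(k)=\frac{F_\theta(N)-F_\theta(0)}{N}=\frac{F_\theta(N)}{N},
\]
using $F_\theta(0)=0$. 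The left-hand side of (SO-C) for agent $i$ therefore equals
\[
\frac{1}{N}\Bigl[\sum_{\theta\succ\theta^\star}\mu(\theta)F_\theta(N)+p^\star\mu(\theta^\star)F_{\theta^\star}(N)\Bigr],
\]
which is $0$ by \eqref{eq:p-star}. So (SO-C) holds with equality for every $i$, as the text anticipates.

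\emph{(SO-N).} Here the only sequence outside $\Gamma_i$ with positive mass is the empty one, and $G_i(\mathbf 0,\theta)=b_\theta-c$. The constraint therefore reduces to
\[
\sum_{\theta\prec\theta^\star}\mu(\theta)(b_\theta-c)+(1-p^\star)\mu(\theta^\star)(b_{\theta^\star}-c)\le 0 .
\]
I would first use $F_\theta(N)=N(b_\theta-c)+\lambda_\theta N/2\ge N(b_\theta-c)$, which holds since $\lambda_\theta\ge0$. This bounds the left-hand side by $\tfrac1N$ times the prior-weighted sum of $F_\theta(N)$ over the \emph{defecting} mass.

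Adding the (SO-C) identity, which says the cooperating mass has zero weighted potential, this bound becomes $\tfrac1N\sum_\theta\mu(\theta)F_\theta(N)$. So it suffices to show that the lower block (the states below the threshold, plus the $(1-p^\star)$ share of $\theta^\star$) has nonpositive weighted potential.

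This is the main obstacle, and the step where the ordering by $S(\theta)$ must be used. Because states are sorted by nondecreasing score and $V(\mathbf 1,\theta)$ is nonnegative (welfare is weakly increasing and normalized by $V(\mathbf 0,\theta)=0$), low-score states are those whose full-cooperation potential is small relative to welfare. I would try to argue from the selection rule in Algorithm~\ref{alg:optimal-policy} that every state placed entirely in the defecting block has $F_\theta(N)\le 0$, or at least that the block sums to something nonpositive.

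If that sign argument fails as stated, the fallback is to check (SO-N) directly on the term $b_\theta-c$. This would use the fact that in defecting states even the first mover's gain must be nonpositive on average.

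I expect to have to pin down carefully the sign conventions in \eqref{eq:p-star} and in the inequality of Algorithm~\ref{alg:optimal-policy} to make this step go through.
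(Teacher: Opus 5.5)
Your (Feas) check is correct. Your (SO-C) computation is also correct: under a uniform full ordering, agent $i$'s number of predecessors is uniform on $\{0,\dots,N-1\}$, so the average of $\Delta F_\theta(k)$ telescopes to $F_\theta(N)/N$. This is what the paper's phrase ``the symmetric construction ensures (SO-C) with equality'' leaves implicit.

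The gap is in (SO-N). You correctly reduce it to $\sum_{\theta\prec\theta^\star}\mu(\theta)(b_\theta-c)+(1-p^\star)\mu(\theta^\star)(b_{\theta^\star}-c)\le 0$ and note that $b_\theta-c\le F_\theta(N)/N$. But you never show that $F_\theta(N)\le 0$ on the defecting block; you only say you ``would try'' to, and your fallback is unjustified. Your detour through the (SO-C) identity also goes in a circle. Since the cooperating block has zero weighted potential, ``the lower block is nonpositive'' is the same claim as ``the total is nonpositive,'' which is the claim you started with.

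The missing step is short, once the sign convention you flagged is fixed. Algorithm~\ref{alg:optimal-policy}'s inequalities must be reversed, as the worked example requires: $\mu(H)F_H(N)=0.975>0>-0.45=\mu(H)F_H(N)+\mu(L)F_L(N)$. So the rule reads $\sum_{\theta\succ\theta^\star}\mu(\theta)F_\theta(N)\ge 0\ge\sum_{\theta\succeq\theta^\star}\mu(\theta)F_\theta(N)$.

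\begin{itemize}
\item Subtracting the two sums gives $\mu(\theta^\star)F_{\theta^\star}(N)\le 0$, so $F_{\theta^\star}(N)\le 0$ and hence $S(\theta^\star)\le 0$.
\item Because states are sorted by nondecreasing score, $S(\theta)\le S(\theta^\star)\le 0$ for every $\theta\prec\theta^\star$. Hence $F_\theta(N)\le 0$ for every state carrying defecting mass.
\item Then $b_\theta-c=F_\theta(N)/N-\lambda_\theta/2\le 0$ in each such state, so every term of (SO-N) is nonpositive.
\end{itemize}

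What the score ordering buys is that the sign of $S(\theta)$ equals the sign of $F_\theta(N)$. Your description of low-score states as having potential ``small relative to welfare'' misses this. The sign transfer needs $V(\mathbf 1,\theta)>0$, or a convention $S=\pm\infty$ when it vanishes; mere nonnegativity is not enough.

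The resulting termwise statement is that no agent gains by deviating from universal defection in any state where cooperation is not recommended. This is exactly the paper's (SO-N) argument, with convexity entering through your bound $\Delta F_\theta(0)\le F_\theta(N)/N$. It makes aggregating with (SO-C) unnecessary.
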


\begin{proof}[Proof sketch]
Feasibility follows from constructing valid distributions in each state. 
The choice of \(p^\star\) in~\eqref{eq:p-star}, together with the symmetric construction in~\eqref{eq:pi-star}, ensures the (SO-C) constraints hold with equality for all agents. 
Moreover, convexity of the potential implies that in any state where full cooperation is not recommended, no single agent can profitably deviate; hence (SO-N) is satisfied.
\end{proof}

\subsection{Optimality}

\begin{theorem}[Perfect coordination is optimal]
\label{thm:perfect}
In any binary-action supermodular game with a \emph{convex} potential and a designer whose welfare is \emph{convex}, the sequential policy $\pi^{\mathrm{seq}\,\star}$ defined in~\eqref{eq:pi-star} is an optimal solution of the linear program~\eqref{eq:optimization-seq}.
\end{theorem}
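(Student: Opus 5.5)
Feasibility of $\pi^{\mathrm{seq}\,\star}$ is given by Lemma~\ref{lem:pi-star-feasible}, so it remains to show that no feasible policy attains higher value in~\eqref{eq:optimization-seq}. The plan is to collapse the LP to a one-dimensional fractional-knapsack problem and then observe that $\pi^{\mathrm{seq}\,\star}$ is exactly the greedy solution of that problem.

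\emph{Step 1 (relaxation).} Take any feasible $\pi^{\mathrm{seq}}$. Sum (SO-C) over all agents $i$. For a fixed sequence $\gamma=(i_1,\dots,i_m)$, the $k$-th invitee faces predecessor profile $a^{\mathrm{pred}}_{-i_k}(\gamma)$ with $k-1$ cooperators. Its marginal gain is therefore $\Delta F_\theta(k-1)$, and these gains telescope:
\[
\sum_{i\in\gamma}G_i\bigl(a^{\mathrm{pred}}_{-i}(\gamma),\theta\bigr)=\sum_{k=1}^{m}\Delta F_\theta(k-1)=F_\theta(|\gamma|),
\]
using $F_\theta(0)=0$. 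Hence every feasible policy satisfies the single aggregate constraint
\[
\sum_\theta\mu(\theta)\sum_\gamma\pi^{\mathrm{seq}}(\gamma\mid\theta)F_\theta(|\gamma|)\ge0 .
\]
We drop (SO-N) entirely, which can only enlarge the feasible set. On the objective side, convex welfare gives $V(a(\gamma),\theta)\le \tfrac{|\gamma|}{N}V(\mathbf 1,\theta)$.

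\emph{Step 2 (reduce to full cooperation).} Convexity of $F_\theta$ together with $F_\theta(0)=0$ gives $F_\theta(m)\le \tfrac{m}{N}F_\theta(N)$. This goes in the wrong direction for the constraint, so we do not use it there. Instead we bound each partial sequence against the linear profile. For each $\theta$, define $x_\theta=\sum_\gamma\pi^{\mathrm{seq}}(\gamma\mid\theta)\tfrac{|\gamma|}{N}\in[0,1]$. The objective is then bounded by $\sum_\theta\mu(\theta)x_\theta V(\mathbf 1,\theta)$. For the constraint, we need a valid relaxation in which $x_\theta$ satisfies $\sum_\theta\mu(\theta)x_\theta F_\theta(N)\ge0$.

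This is the main obstacle, because convexity gives $F_\theta(m)\le\tfrac mN F_\theta(N)$. That inequality makes the aggregate constraint for $x$ \emph{implied} (the left side only grows), so it is indeed a valid relaxation:
\[
0\le\sum\mu\pi F_\theta(|\gamma|)\le \sum_\theta\mu(\theta)x_\theta F_\theta(N).
\]
Both Step 1 and Step 2 point in the favorable direction: partial coordination tightens the incentive constraint and lowers welfare relative to the linear interpolation. I would double-check the case $V(\mathbf 1,\theta)\le 0$ or sign issues in $S(\theta)$. Since $V$ is weakly increasing with $V(\mathbf 0,\theta)=0$, we have $V(\mathbf 1,\theta)\ge0$; states with $V(\mathbf 1,\theta)=0$ contribute nothing and can be handled separately.

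\emph{Step 3 (knapsack).} We now solve the relaxed problem
\[
\max_{x\in[0,1]^\Theta}\ \sum_\theta\mu(\theta)x_\theta V(\mathbf 1,\theta)\quad\text{subject to}\quad\sum_\theta\mu(\theta)x_\theta F_\theta(N)\ge0 .
\]
This is a fractional knapsack. States with $F_\theta(N)\ge0$ are free to include, and the others consume budget at rate $-F_\theta(N)/V(\mathbf 1,\theta)=-S(\theta)$ per unit of welfare. A standard exchange argument shows that the greedy order by decreasing $S$ is optimal: set $x_\theta=1$ for the states with the largest scores until the constraint binds, and split the threshold state fractionally. This gives exactly $x_\theta=1$ for $\theta\succ\theta^\star$, $x_{\theta^\star}=p^\star$, and $0$ otherwise. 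Formally I would verify optimality through LP duality, taking multiplier $y=-1/S(\theta^\star)\ge0$ and checking the reduced costs $V(\mathbf 1,\theta)(1+yS(\theta))$ against the ordering.

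Finally, $\pi^{\mathrm{seq}\,\star}$ attains this relaxed value, since it uses only full sequences (so both bounds hold with equality) and is feasible by Lemma~\ref{lem:pi-star-feasible}. It is therefore optimal for~\eqref{eq:optimization-seq}.
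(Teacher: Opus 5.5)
Your proof is correct, and it takes a different route from the paper. The paper's sketch is geometric. It asserts that convexity of $F_\theta$ and $V$ forces every extreme point of the LP's feasible polytope to be all-or-none, and then picks the best such vertex by balancing (SO-C). You never use the vertex structure. Instead you build an upper bound in four moves:
\begin{itemize}
\item sum (SO-C) over agents, so the predecessor gains telescope to $F_\theta(|\gamma|)$;
\item drop (SO-N);
\item apply the two chord inequalities $F_\theta(m)\le\frac{m}{N}F_\theta(N)$ and $V(a,\theta)\le\frac{n(a)}{N}V(\mathbf{1},\theta)$;
\item solve the resulting single-constraint fractional knapsack in $x_\theta\in[0,1]$, whose greedy/dual solution is exactly the threshold rule.
\end{itemize}
Since $\pi^{\mathrm{seq}\,\star}$ attains this bound, it is optimal.

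Your route gives three things. First, a complete primal--dual certificate. Second, an explanation of why $S(\theta)$ is the right score: it measures welfare per unit of the aggregate obedience budget. Third, independence from the paper's vertex claim, which is not true as stated. Take $N=2$, $\lambda_\theta\equiv 0$, equal priors, a dominant state $A$ with $b_A-c=1$, and a state $B$ with $b_B-c=-1$. The deterministic policy $A\mapsto(1,2)$, $B\mapsto(1)$ satisfies (Feas), (SO-C) and (SO-N). Being deterministic in each state, it is an extreme point of the feasible set, yet it is not all-or-none. What does hold is only that some optimal solution is all-or-none, and that is exactly what your bound delivers. The paper's argument is shorter, but it would need something like your bound to be made rigorous.

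Two cleanups are needed.

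\textbf{Step 2.} Delete the false start saying the chord inequality for $F_\theta$ ``goes in the wrong direction.'' As you then conclude, it is precisely what makes the relaxation valid.

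\textbf{Step 3.} State the edge cases explicitly.
\begin{itemize}
\item States with $V(\mathbf{1},\theta)=0$ and $F_\theta(N)>0$ do not ``contribute nothing.'' They relax the constraint and must get $x_\theta=1$, i.e.\ score $+\infty$.
\item The multiplier $y=-1/S(\theta^\star)$ is nonnegative because $F_{\theta^\star}(N)<0$: every positive-score state is used up before the constraint can bind.
\item Complementary slackness holds because \eqref{eq:p-star} makes the aggregate constraint bind.
\item If $\sum_\theta\mu(\theta)F_\theta(N)\ge 0$, the constraint never binds. The optimum is then universal cooperation and the certificate uses $y=0$.
\end{itemize}
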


\begin{proof}[Proof sketch]
The LP \eqref{eq:optimization-seq} is a convex optimization over a polytope.  
Convexity of both the potential and welfare ensures that extreme points of this feasible set are all-or-none policies.  
Balancing sequential obedience identifies the optimal extreme point, which coincides with $\pi^{\mathrm{seq}\,\star}$.
\end{proof}

\subsubsection*{\textbf{Generality and Computational Efficiency}}
Theorem~\ref{thm:perfect} relies on our standing assumptions: an exact-potential game and convexity of the potential and welfare.  
In contrast, the LP~\eqref{eq:optimization-seq} applies to any binary-action supermodular game without these assumptions but becomes computationally intractable as the number of agents grows.  
Our constructive method instead yields a closed-form optimum under convexity and is computationally efficient: computing state scores requires \(O(|\Theta|)\) time, sorting them takes \(O(|\Theta|\log|\Theta|)\), and finding the optimal threshold and mixing weight is \(O(|\Theta|)\).  
Overall, the entire policy can be computed in \(O(|\Theta|\log|\Theta|)\) time, independent of the population size \(N\).

\subsection{MAS takeaways and implementation guidance}

\begin{figure*}[t]
\centering
\begin{tikzpicture}[
    font=\small,
    panel/.style={draw, rounded corners=4pt, minimum width=4cm, minimum height=4.3cm, align=left},
    title/.style={font=\bfseries\scriptsize},
    blob/.style={fill opacity=0.35, draw opacity=0},
    blobline/.style={draw=black, opacity=0.10, line width=0.4pt},
    staropt/.style={star, star points=5, star point ratio=2.2, minimum size=1mm, draw=black, fill=green!60!black},
    arrowflow/.style={-Latex, thick},
    algo/.style={draw, rounded corners=4pt, fill=gray!10, minimum height=1.1cm, align=center},
    runtime/.style={draw, rounded corners=3pt, fill=white, inner sep=4pt},
    axis/.style={draw=black, opacity=0.16, line width=0.4pt}
]

\node[panel, fill=blue!5] (A) {};
\node[panel, fill=orange!6, right=1.5cm of A] (B) {};
\node[panel, fill=green!6, right=1.5cm of B] (C) {};

\node[title, anchor=north] at ([yshift=-1mm]A.north)
{Classical persuasion / BCE};

\node[title, anchor=north] at ([yshift=-1mm]B.north)
{Robust Implementability};

\node[title, anchor=north] at ([yshift=-1mm]C.north)
{Convex potential and welfare};

\begin{scope}[shift={(A.center)}, yshift=4mm]
  \draw[axis] (-1.8,0) -- (1.8,0);
  \draw[axis] (0,-1.6) -- (0,1.3);

  \fill[blue!60, blob]
      (-1.45,-0.75) .. controls (-1.65,0.65) and (-0.65,1.45)
      .. (0.65,1.05)
      .. controls (1.65,0.55) and (1.35,-0.85)
      .. (0.25,-1.5)
      .. controls (-0.85,-1.45) and (-1.45,-0.75)
      .. cycle;

  \node[circle, fill=blue!80!black, minimum size=5pt, inner sep=0pt] at (0.15,0.9) {};
\end{scope}

\node[anchor=south west, font=\scriptsize] at ([xshift=2mm,yshift=1mm]A.south west)
{\begin{tabular}{l}
$\bullet$ Obedience only \\
$\bullet$ Optimistic selection
\end{tabular}};

\begin{scope}[shift={(B.center)}, yshift=4mm]
  \draw[axis] (-1.8,0) -- (1.8,0);
  \draw[axis] (0,-1.6) -- (0,1.3);

  \fill[blue!60, fill opacity=0.18, draw opacity=0]
      (-1.45,-0.75) .. controls (-1.65,0.65) and (-0.65,1.45)
      .. (0.65,1.05)
      .. controls (1.65,0.55) and (1.35,-0.85)
      .. (0.25,-1.5)
      .. controls (-0.85,-1.45) and (-1.45,-0.75)
      .. cycle;


  \begin{scope}[yshift=-2mm]
  \fill[orange!80!black, blob]
      (-1.05,-0.55) .. controls (-1.15,0.55) and (-0.35,1.10)
      .. (0.70,0.88)
      .. controls (1.20,0.30) and (0.95,-0.70)
      .. (0.25,-0.98)
      .. controls (-0.65,-1.08) and (-1.05,-0.55)
      .. cycle;
      \node[circle, fill=orange!80!black, minimum size=5pt, inner sep=0pt] at (0.42,0.72) {};
  \end{scope}
\end{scope}

\node[anchor=south west, font=\scriptsize] at ([xshift=2mm,yshift=0mm]B.south west)
{\begin{tabular}{l}
$\bullet$ Sequential obedience \\
$\bullet$ Conservative beliefs \\
$\bullet$ Robust feasible set shrinks
\end{tabular}};

\begin{scope}[shift={(C.center)}, yshift=4mm]
  \draw[axis] (-1.8,0) -- (1.8,0);
  \draw[axis] (0,-1.6) -- (0,1.3);



  \begin{scope}[yshift=-2mm]
  \fill[orange!80!black, fill opacity=0.3, draw opacity=0]
      (-1.05,-0.55) .. controls (-1.15,0.55) and (-0.35,1.10)
      .. (0.70,0.88)
      .. controls (1.20,0.30) and (0.95,-0.70)
      .. (0.25,-0.98)
      .. controls (-0.65,-1.08) and (-1.05,-0.55)
      .. cycle;
\end{scope}

  \draw[fill=green!35!white]
      (-0.98,-0.8) --
      (-0.58,0.47) --
      (0.89,0.32) --
      (0.715,-0.80) --
      cycle;

  \node[staropt, scale=0.3] at (0.89,0.32) {};

\end{scope}

\node[anchor=south west, font=\scriptsize] at ([xshift=2mm,yshift=0mm]C.south west)
{\begin{tabular}{l}
$\bullet$ Optimum at an extreme point \\
$\bullet$ Perfect coordination: all or none \\
$\bullet$ Threshold on score $S(\theta)$
\end{tabular}};

\draw[arrowflow] (A.east) --
    node[above, font=\scriptsize]{Robustness}
    node[below, font=\scriptsize]{requirement}
(B.west);

\draw[arrowflow] (B.east) -- node[above, font=\scriptsize]
{Convexity} (C.west);

\node[algo, below=0.35cm of B, minimum width=14cm] (ALG) {};

\node[align=center] at (ALG.center)
{
\textbf{Constructive optimal policy}
\\[3pt]
Compute $S(\theta)$ $\rightarrow$ sort $\theta$ $\rightarrow$ threshold $\theta^\star$ $\rightarrow$ mix at one state
};

\node[runtime, anchor=east, align=center, font=\scriptsize]
at ([xshift=-4mm, yshift=0mm]ALG.east)
{
Runtime: $\mathcal{O}(|\Theta|\log|\Theta|)$ \\
Independent of $N$
};
\end{tikzpicture}

\caption{
Robust information design under smallest-equilibrium play. Sequential obedience restricts implementability relative to classical persuasion. Under convex potential and welfare, the optimum is an extreme point, yielding a threshold policy computable in $\mathcal{O}(|\Theta|\log|\Theta|)$ time.
}
\label{fig:visualization}
\end{figure*}
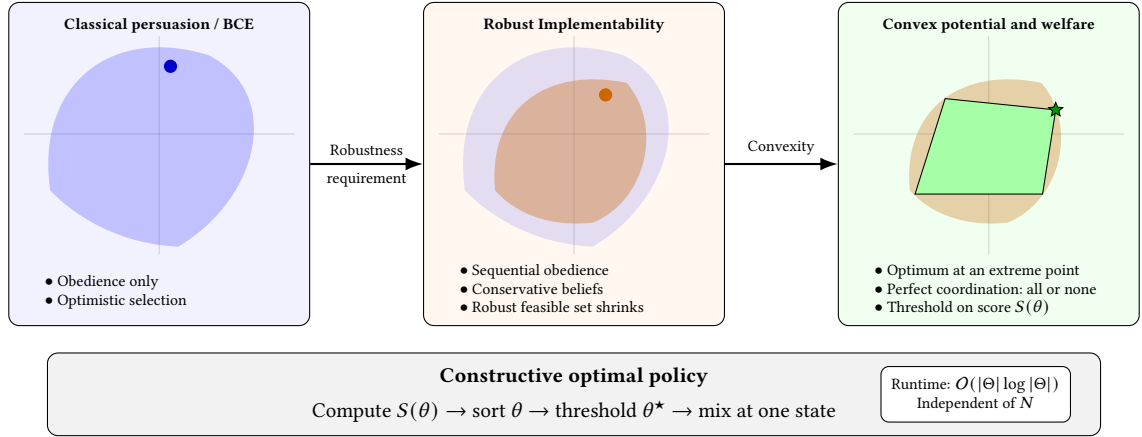

\setlist[itemize]{left=0pt}
\begin{itemize}
\item \textbf{One-dimensional score.} States are ranked by 
$S(\theta)=F_\theta(N)/\allowbreak V(\mathbf{1},\theta)$; sorting and balancing sequential obedience yields the exact optimum.

\item \textbf{All-or-none robustness.} Under convexity, optimal robust policies never use partial adoption within a state: either \emph{everyone cooperates} or \emph{no one does}.

\item \textbf{Closed-form scalability.} The threshold rule is an explicit LP vertex computable in $\mathcal{O}(|\Theta|\log|\Theta|)$ time.
\end{itemize}

Figure \ref{fig:visualization} visually summarizes the transition from classical persuasion to robust implementability and the resulting threshold structure.



\section{Numerical Results}
\label{sec:results}

We validate the constructive characterization of Section~\ref{sec:perfect-coordination} through two multi-agent domains.  
The first revisits the running example to clarify the threshold mechanics and highlight the contrast with partial implementation.  
The second considers a nearly continuous state space, showing that the approach remains tractable and interpretable beyond small illustrative settings.  
All results are reproducible; code is available~\citep{Farhadi2026Code}.

\subsection{Case study~1: Vaccination-style environment}

\paragraph{Setup.} We revisit the running example as a vaccination scenario with $N=3$ agents choosing $a_i\in\{0,1\}$.  
States are $\Theta=\{L,H\}$ with prior $\mu(L)=\mu(H)=0.5$, parameters
$(b_L,\lambda_L)=(1,0.1)$, $(b_H,\lambda_H)=(2.4,0.5)$, and cost $c=2$.  
Designer welfare is convex in coverage:
\[
V(a,\theta)=\alpha_\theta \Bigl(\tfrac{n(a)}{N}\Bigr)^{\beta},
\qquad \beta=1.5,
\]
with $\alpha_L=6$ and $\alpha_H=12$.

\paragraph{Constructive policy.} Full-vaccination potentials are $F_L(N)=-2.85$ and $F_H(N)=1.95$, giving scores
$S(L)=-0.475$ and $S(H)=0.1625$, hence $H \succ L$.
Solving the sequential-obedience balance
\[
\mu(H)F_H(N)+p^\star\mu(L)F_L(N)=0
\]
yields $p^\star=0.684$ with threshold $\theta^\star=L$.

The optimal policy $\pi^{\mathrm{seq}\,\star}$ is:
\begin{itemize}
\item in $H$, recommend vaccination to all agents;
\item in $L$, recommend vaccination to all agents with probability $0.684$, and to none otherwise.
\end{itemize}
When recommending vaccination, probability is split equally across all six orderings.
This achieves expected welfare $8.05$.

\paragraph{\textbf{Why private sequential signaling matters.}} If the same randomization ($0.684$ in $L$, $1$ in $H$) were implemented via a public signal, observing “vaccinate’’ yields
\[
\Pr(H\mid\text{vac})
=
\frac{0.5}{0.5+0.5*0.684}
\approx 0.595.
\]
They would then play a simultaneous-move game under this common posterior, where the expected marginal gain from vaccinating is
\[
G_i(a_{-i},\text{posterior})
= 0.595\,G_i(a_{-i},H) + 0.405\,G_i(a_{-i},L).
\]
In particular, if no one vaccinates, each agent’s deviation payoff is
\[
G_i(\mathbf{0},\text{posterior})
= 0.595 * 0.4 + 0.405 * (-1)
= -0.167 < 0,
\]
so universal non-vaccination is a Bayesian NE and the \emph{smallest}-equilibrium of the simultaneous game. Under public signaling, the intended vaccination policy collapses to no adoption. The robust optimum, therefore, relies on private sequential invitations.

\paragraph{\textbf{Contrast with partial-implementation (optimistic) design.}} A designer checking only classical obedience would recommend vaccination in both states, predicting welfare $9$.  
Since the signal is uninformative, agents retain the prior $0.5$, yielding
\[
G_i(a_{-i},0.5)=0.5\,G_i(a_{-i},H)+0.5\,G_i(a_{-i},L).
\]
Then, we can verify that both \emph{no vaccination} and \emph{full vaccination} are Bayesian NE:
\begin{itemize}
    \item \textbf{No vaccination is a Bayesian NE:}  
    \[
    G_i(\mathbf{0},0.5)=0.5*(0.4)+0.5*(-1)=-0.3<0,
    \]
    so no agent wishes to vaccinate when everyone declines.
    \item \textbf{Full vaccination is a Bayesian NE:}  
    \[
    G_i(\mathbf{1},0.5)=0.5*(0.4+0.5)+0.5*(-1+0.1)=0\ge 0,
    \]
    so no agent has an incentive to deviate from vaccination when all others vaccinate.
\end{itemize}

Thus, under partial implementation, this policy is \emph{optimal} because full vaccination is a Bayesian NE in every state. Yet under smallest-equilibrium play, agents coordinate on no vaccination, yielding welfare 0. This illustrates why sequential obedience is essential: ignoring it can produce policies that appear optimal but collapse under conservative reasoning. 

\paragraph{\textbf{Take-home message.}} The threshold rule (vaccinate in $H$, mix in $L$ with $p^\star=0.684$) is simple, exact, and robust.  
Sequential obedience is essential: public or optimistic designs fail once agents reason conservatively.

\subsection{Case study~2: Technology adoption}

\paragraph{\textbf{Setup.}}
The vaccination example illustrated how the robust policy works in a simple two-state setting and why smallest-equilibrium reasoning matters.  We now consider a richer environment that approximates a continuous type space: a technology adoption problem with \(N=10\) agents and nearly continuous uncertainty.  
States are \(\theta \in \{0.01,0.02,\ldots,1.00\}\), representing \emph{ecosystem readiness}.  
As readiness rises, both the private benefit \(b_\theta\) (from \(0.5\) to \(2\)) and the complementarity \(\lambda_\theta\) (from \(0.1\) to \(0.8\)) increase linearly.  
This fine discretization approximates a continuous domain within our finite-state framework. 
Designer welfare uses the same convex coverage function as in Case Study~1, with \(\alpha_\theta\) increasing from \(6\) to \(12\).

\paragraph{\textbf{Baselines and evaluation.}}
We compare three policies:
\begin{enumerate}[label=(\arabic*)]
    \item \textbf{Robust (ours).} The policy \(\pi^{\mathrm{seq}\,\star}\), which satisfies (SO-C) and (SO-N) and is robust under smallest-equilibrium play.  
    \item \textbf{BCE-Optimistic.} Solves the designer’s problem under classical obedience (Bayesian persuasion/BCE), assuming coordination on the designer-preferred equilibrium.
    \item \textbf{BCE-Realized.} Takes the BCE policy from (2) but \emph{evaluates it} under smallest-equilibrium play.
\end{enumerate}

\paragraph{\textbf{Results.}}
With \(c=2\), Fig.~\ref{fig:case2_combined} shows that the score 
\(S(\theta)=F_\theta(N)/\allowbreak V(\mathbf{1},\theta)\) orders the states smoothly, yielding an almost continuous threshold.  
The robust policy prescribes full adoption at high readiness, none at low readiness, and mixes at \(\theta^\star=0.56\) with \(p^\star=0.24\).  
In contrast, BCE-Optimistic recommends adoption for all \(\theta>0.28\) but collapses to no adoption under smallest-equilibrium play.

Fig.~\ref{fig:case2_welfare} compares welfare under the three policies: 
\textbf{Robust} (blue), \textbf{BCE-Optimistic} (orange), and \textbf{BCE-Realized} (red).  
The orange curve shows welfare \emph{predicted by classical BCE analysis}, assuming agents coordinate on the designer-preferred equilibrium.  
The red curve shows the \emph{realized welfare} when agents instead play the smallest-equilibrium: whenever “no adoption’’ is also a BNE 
(\(1.5\!\le\!c\!\le\!2.95\)), agents select that equilibrium, and welfare drops to zero. 
Hence, BCE-Optimistic overstates achievable welfare, while BCE-Realized reveals what the same policy delivers with cautious agents.  
The robust policy (blue) achieves the true coordination frontier under smallest-equilibrium play.

At low costs ($c \le 1.5$), all policies coincide (universal adoption); at high costs ($c \ge 2.95$), all converge to zero (universal non-adoption).  
Overall, the results highlight the practical value of smallest-equilibrium robustness: it identifies credible, implementable coordination outcomes where classical persuasion fails.

\begin{figure}[t]
    \centering
    \includegraphics[width=0.98\linewidth]{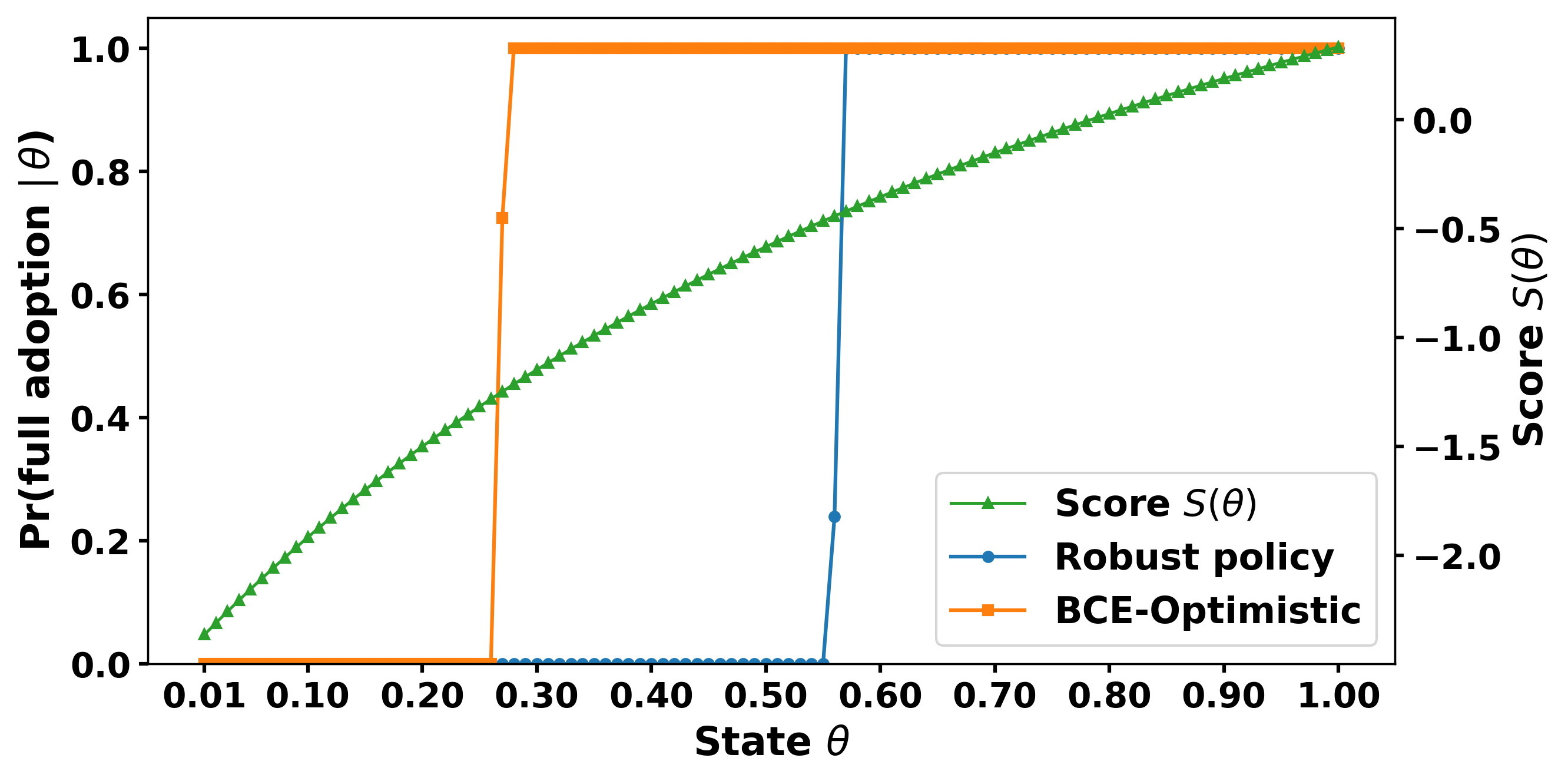}
    \caption{Technology adoption with nearly continuous states. 
Scores \(S(\theta)\) induce a near-continuous threshold. 
The robust policy \(\pi^{\mathrm{seq}\,\star}\) adopts only where coordination is sustainable, while BCE policy (orange) recommends broader adoption but collapses to no adoption under smallest-equilibrium play.}
    \label{fig:case2_combined}
\end{figure}
\begin{figure}[t]
    \centering
    \includegraphics[width=0.98\linewidth]{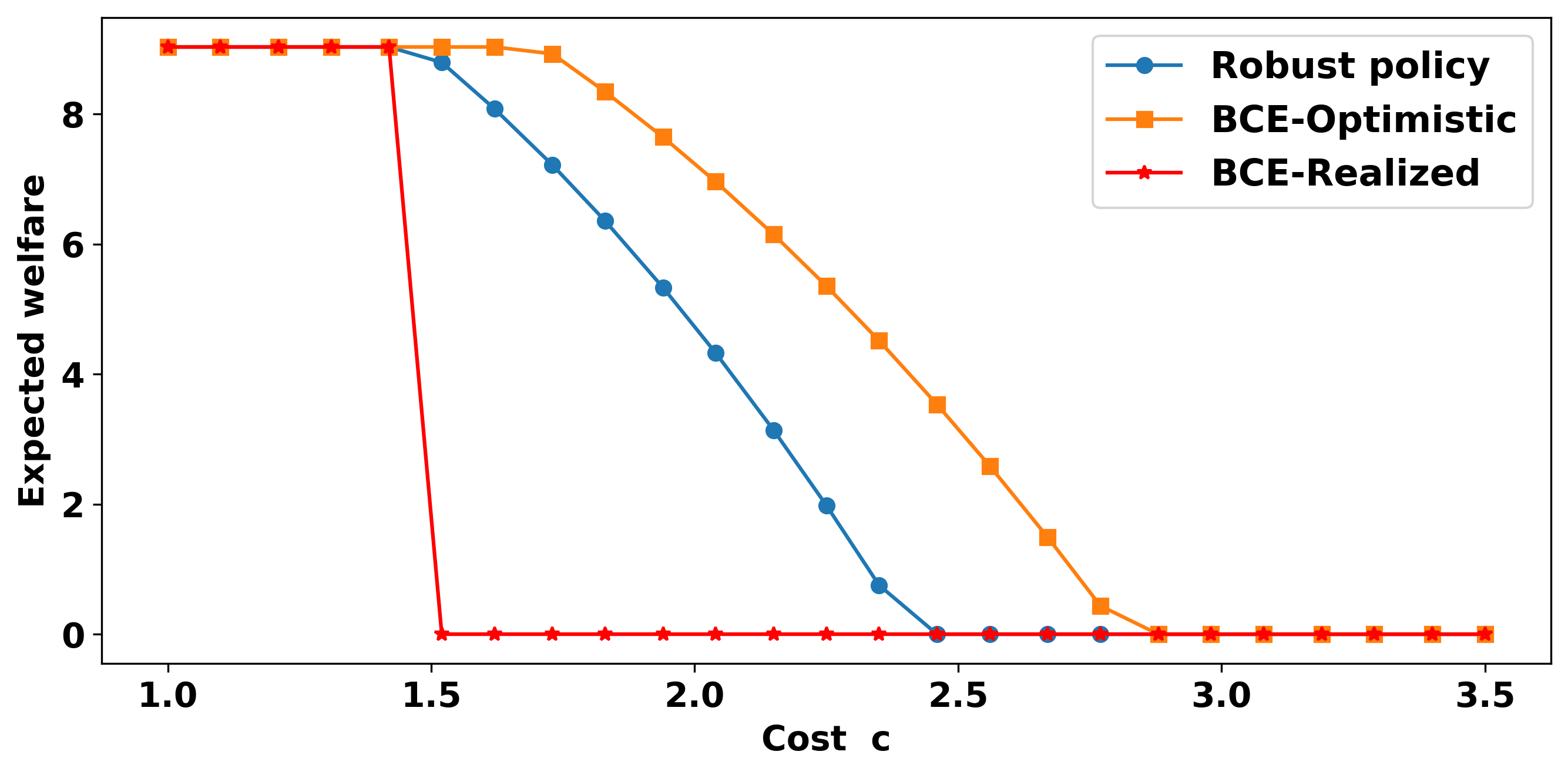}
    \caption{Welfare under Robust, \textbf{BCE-Optimistic} (classical obedience only), 
    and \textbf{BCE-Realized} (same BCE policy evaluated under smallest-equilibrium). 
    For medium costs, the BCE-Optimistic curve sits above Robust, but \textbf{BCE-Realized} drops to \textbf{zero}, 
    revealing that the optimistic gains are not achievable under conservative play.}
    \label{fig:case2_welfare}
\end{figure}

\paragraph{Discussion.}
This case shows the constructive threshold rule extends beyond small finite settings.  
Even though our analysis assumes finitely many states, the technology-adoption example demonstrates that with fine discretization the method handles nearly continuous domains, yielding stable, interpretable policies and welfare.  
Together, the two case studies illustrate both the rule’s clarity in simple environments and its robustness in richer ones.

\section{Conclusion and future work}

We develop a constructive threshold framework for robust information design under smallest-equilibrium play. Unlike classical obedience-based approaches, our method ensures robustness to coordination failure. The resulting policies are simple, interpretable, and scalable to nearly continuous domains while capturing the true limits of sustainable coordination.

\paragraph{\textbf{Future work.}}
One direction is to study agents embedded in social networks, where complementarities are local rather than global, and to analyze how network topology affects optimal disclosure.  
Another extension is to apply the framework to real vaccination data, enabling empirical validation of smallest-equilibrium effects and threshold-based policies.
Finally, one could shift the designer’s objective from maximizing compliance to promoting diversity (e.g., avoiding monopolies); in this case the perfect-coordination reduction no longer applies, but the potential-based scoring approach may be adapted to reward heterogeneous outcomes.



\section{Acknowledgement}
This paper has been accepted for publication in Proceedings of the 25th International Conference on Autonomous Agents and Multiagent Systems (AAMAS 2026). The final published version is available via the ACM Digital Library.

\bibliographystyle{ACM-Reference-Format} 
\bibliography{sample}


\end{document}